 \def\picill#1by#2(#3)
 \vfill\epsffile{#3}}}
\newcommand{\eq}{\begin{equation}}
\newcommand{\en}{\end{equation}}
\newcommand{\eqa}{\begin{eqnarray}}
\newcommand{\ena}{\end{eqnarray}}
\newcommand{\Z}{\mathbb{Z}}
\newcommand{\Hc}{\mathcal{H}}
\newcommand{\Pc}{\mathcal{P}}
\newcommand{\B}{\mathcal{B}}
\newcommand{\PP}{\mathcal{PB}}
\newcommand{\Ss}{\mathcal{S}}
\newcommand{\C}{\mathbb{C}}
\newcommand{\bE}{\mathbf{E}}
\numberwithin{equation}{section}
\newtheorem{thm}[equation]{Theorem}
\theoremstyle{definition}
\newtheorem{remark}[equation]{Remark}
\newtheorem{remarks}[equation]{Remarks}
\begin{document}

\setlength{\unitlength}{1mm} \thispagestyle{empty}



 \begin{center}
 {\bf \small  Abstract Error Groups Via Jones Unitary
  Braid Group Representations at q=i}

 \vspace{.2cm}

 Yong Zhang
 \footnote{yong@cs.ucf.edu} \\[.2cm]

 School of Electrical Engineering and Computer Science\\
 University of Central Florida, Orlando, FL 32816-2362
 \\[0.1cm]

 \end{center}

 \vspace{0.2cm}

\begin{center}
\parbox{14cm}{
\centerline{\small  \bf Abstract}  \noindent\\

In this paper, we classify a type of abstract groups by the central
products of dihedral groups and quaternion groups. We recognize them
as abstract error groups which are often not isomorphic to the Pauli
groups in the literature. We show the corresponding nice error bases
equivalent to the Pauli error bases modulo phase factors. The
extension of these abstract groups by the symmetric group are finite
images of the Jones unitary representations (or modulo a phase
factor) of the braid group at $q=i$ or $r=4$. We hope this work can
finally lead to new families of quantum error correction codes via
the representation theory of the braid group.

}

\end{center}

\vspace{.2cm}

\begin{tabbing}
{\bf Key Words:}  Abstract Error Group, Nice Error Base,\\
  Unitary Braid Representation, Extraspecial Two-Groups\\[.2cm]

{\bf PACS numbers:} 03.65.Ud, 02.10.Kn, 03.67.Lx
\end{tabbing}

 \newpage

 \section{Introduction}

Quantum error correction codes (QECC)
 are devised to
protect quantum information and computation from various kinds of
noise. These codes \cite{shor95,steane96a, crss97b, gottesman97a,
shor96}  are binary stabilizer codes with {\em abelian} normal
subgroups in the error model specified by the real or complex Pauli
groups. To explore non-binary QECC allowing arbitrary normal
subgroups later called Clifford codes, Knill
\cite{knill96a,knill96b} introduces {\em nice error bases} as well
as {\em abstract error groups}, which are refined by  Klappenecker
and R{\"o}tteler \cite{kr02a,kr02b} with the help of the Clifford
theorem \cite{gorenstein67}.

In this paper, we recognize the abstract groups exploited in
\cite{frw06,zg07,zrwwg07} as a type of the abstract error groups, as
is not done  before to the author's best knowledge. They are often
not isomorphic to the Pauli groups, though they give rise to binary
stabilized codes as the Pauli groups do. Furthermore, the extension
of these abstract error groups by the symmetric group are isomorphic
to finite images of the Jones unitary braid group representations
(UBGR) at $q=\exp(i\frac {2 \pi} r), r=4$ (or modulo a phase
factor), see \cite{jones86, frw06}. Moreover, see \cite{zg07,
zrwwg07},  some of these UBGR act as unitary basis transformation
matrices from the product basis to the Greenberger-Horne-Zeilinger
(GHZ) states \cite{ghz89}. QECC involving UBGR, for example, QECC
using GHZ states \cite{zhang08} are explored with the Shor
nine-qubit code \cite{shor95} as the simplest example.
\begin{figure}[!hbp]
\begin{center}
\epsfxsize=8.5cm \epsffile{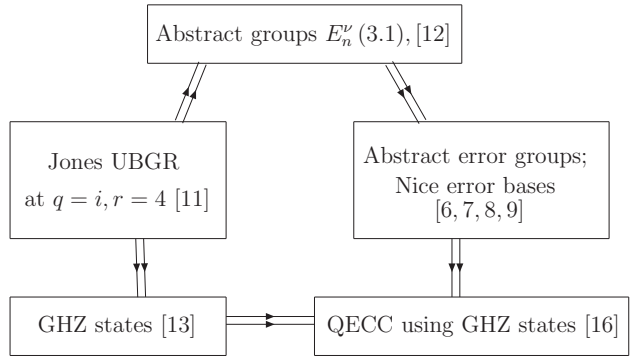} \caption{From the Jones
UBGR at $q=i, r=4$ to QECC using GHZ states} \label{fig1}
\end{center}
\end{figure}

Figure 1 is a diagrammatical exposition on the network formed by the
present paper and related references, and it also partly explains
our motivations of writing the present paper. The plan of this paper
is organized as follows. Section 2 is a preliminary on the dihedral
group, quaternion group and extraspecial two-groups. Section 3
focuses on the classification of our abstract error groups, and
compares them with the real or complex Pauli groups.  Section 4
presents associated nice error bases equivalent to the Pauli error
bases modulo phase factors, and then with them yields the Jones UBGR
at $q=i, r=4$ (or modulo a phase factor). Section 5 briefly remarks
our further research.

\section{Dihedral groups and quaternion groups}

We sketch basic facts used in the following sections about the
dihedral group $D$, quaternion group $Q$ and extraspecial two-groups
$G$. Note that relevant notations and definitions are consistent
with those in Gorenstein's book \cite{gorenstein67}.

The {\em dihedral group} $D$ with two generators $k$ and $h$ is a
set defined as \eqa
 D=\{k,h|k^2=1, h^2=-1, h k=-k h  \},
\ena whereas two generators $k$ and $h$ of the {\em quaternion
group} $Q$ satisfy very similar but distinct algebraic relations,
\eqa
 Q=\{ k,h| k^2=h^2 =-1, h k =-k h \}.
\ena Both groups have the same order $|D|=|Q|=8$ and the center
$\Z_2=\{\pm 1\}$, but they are not isomorphic to each other. The
number of cyclic subgroups of order 4 in the dihedral group $D$ is
1, whereas this number in the quaternion group $Q$ is 3. The
dihedral group $D$ is the symmetry group of a square. The quaternion
group $Q$ has a presentation in terms of quaternion, i.e.,
$\{\pm1,\pm{\bf i},\pm{\bf j},\pm{\bf k}\}$ satisfying ${\bf
i}^2={\bf j}^2={\bf k}^2={\bf ijk}=-1$.

A group $G$ of the form $G=H\circ K$ with the center $Z(G)$ is
called the {\em central product} of its two subgroups $H$ and $K$ if
and only if $hk=kh$ for all $h\in H$, $k\in K$ and $H\cap K
\subseteq Z(G)$. The central product $D^2 \equiv D \circ D$ is
isomorphic to the central product $Q^2\equiv Q \circ Q$, and hence
the central product of dihedral groups $D$ and quaternion groups $Q$
is either isomorphic to $QD^{r-1}$ or isomorphic to $D^r$, namely,
\eqa
 QD^{r-1} \equiv Q \circ \underbrace{D \circ \cdots \circ D}_{r-1},
 \quad D^{r} \equiv \underbrace{D \circ \cdots \circ D}_{r},
\ena which are not isomorphic to each other. Since $D^2\cong Q^2$,
the central product of $D$ and $Q$ is also isomorphic to  $DQ^{s-1}$
or $Q^s$, and the number of cyclic subgroups of order $4$ in
$DQ^{s-1}$ is denoted by $n$ as well as this number in $Q^s$ is by
$m$, respectively given by \eq
 \label{order_4}
 n=(2^{2s} + (-2)^s)/2, \quad m=(2^{2s} - (-2)^s)/2.
\en

{\em Extraspecial two-groups} $G$ are the central product of
quaternion groups $Q$ and dihedral groups $D$, and so there are two
extraspecial two-groups, $DQ^{r-1}$ or $Q^r$ with the same order
$2^{2 r+1}$. They can be defined in the other way if the following
hold for a group $G$: a) the center $Z(G)$ is the cyclic group
$\Z_2=\{\pm 1\}$; b) the quotient group $G/Z(G)$ is a nontrivial
elementary abelian 2-group; c) the order is $2^{2 r+1}$. With the
terminology in \cite{frw06},  a central product consisting of a
$\Z_2$ group, dihedral groups $D$ and quaternion groups $Q$ is
called {\em nearly extraspecial two-group} with the center
$\Z_2\times \Z_2$, and a central product in terms of a $\Z_4$ group,
dihedral groups $D$ and quaternion groups $Q$ is called {\em almost
extraspecial two-group} with the center $\Z_4$.

 \section{Abstract error groups $\bE^{\nu}_n$}

We identify a type of abstract groups $\bE^\nu_n$ as abstract error
groups, classify them by central products of dihedral groups,
quaternion groups, and $\Z_2$ or $\Z_4$,  and then compare them with
the Pauli groups accordingly.

 \subsection{Abstract groups $\bE^{\nu}_n$}

A finite group $G$ is called an {\em abstract error group}
\cite{kr02b} if it has an irreducible faithful unitary
representation $\phi$ with the degree defined by
$\deg\phi=|G:Z(G)|^{\frac 1 2}$, i.e., $Tr\phi(1)=|G/Z(G)|^{\frac 1
2}$. Let $d$ denote $Tr\phi(1)$. The set $\xi$ of unitary matrices,
$\xi=\{\phi(g)|\,g\in G/Z(G) \}$ forms a {\em nice error basis}
\cite{kr02a} in the $d$ dimensional Hilbert space. The quotient
group $G/Z(G)$ is called the {\em index group} associated with the
nice error basis $\xi$, with the order $|G/Z(G)|=d^2$.

 The abstract group $\bE^\nu_n$ has been recently explored
 \cite{frw06, zg07, zrwwg07}, and it is yielded by generators
 $e_1, \cdots, e_n$ satisfying:
 \eqa
 && (a) \quad  e_i^2=\nu, \quad i=1,\cdots, n, \nonumber\\
 && (b) \quad  e_i e_{i+1} =- e_{i+1} e_i, \quad i=1,\cdots, n-1, \nonumber\\
 && (c) \quad  e_i e_j =e_j e_i, \quad |i-j|\ge 2,\,\,\, i,j=1,\cdots, n
 \ena
 with $\nu$ an element of the center $Z(\bE^{\nu}_n)$, either $\nu=1$
 or $\nu=-1$. The center $Z(\bE_{2k}^\nu)$ is $\Z_2=\{\pm 1\}$ and the
 center $Z(\bE_{2k+1})$ is $\{\pm 1, \pm e_1 e_3\cdots e_{2k+1} \}$
 isomorphic to either $\Z_4$ or $\Z_2\times \Z_2$. The order of $\bE^\nu_n$ is
 $|\bE^\nu_n|=2^{n+1}$. The quotient group $\bE^\nu_n/\Z_2$ is isomorphic to
 the elementary abelian two-group, and hence $\bE^\nu_{2k}$ is
 isomorphic to extraspecial two-groups.

The abstract groups $\bE^\nu_n$ can be easily verified to be
abstract error groups, with the help of irreducible representation
theory \cite{frw06}. $\bE^\nu_{2k}$ has a $2^k$-dimensional faithful
irreducible unitary representation $\rho$ with the degree
$2^k=|\bE^\nu_{2k}/\Z_2|^{\frac 1 2}$. $\bE^{\nu}_{2k+1}$ has two
inequivalent $2^k$ dimensional faithful irreducible representations
$\lambda_1$ and $\lambda_2$ with the same degree $2^k=
|\bE^\nu_{2k+1}/Z(\bE^\nu_{2k+1})|^{\frac 1 2}$. The related nice
error bases are respectively denoted by $\rho_{2k}$,
$\lambda_{1,2k}$, $\lambda_{2,2k}$ (see Subsection 4.1).

 \subsection{The classification of $\bE^{\nu}_n$}

We firstly classify the abstract error groups $\bE^{-1}_n$ and
secondly classify $\bE^{1}_n$.

 \begin{thm}\label{cl_e_ne} Abstract error groups $\bE_n^{-1}$ are isomorphic to
 central products of quaternion groups $Q$, dihedral groups $D$, $\Z_2$
 or $\Z_4$. They are classified into three categories:
  \eqa
  && \bE^{-1}_{8j} \cong D^{4j}, \nonumber\\
  && \bE^{-1}_{8 j+1} \cong  \Z_4 \circ D^{4 j}, \nonumber\\
  && \bE^{-1}_{8 j+2} \cong  QD^{4j}, \nonumber\\
  && \bE^{-1}_{8 j+3} \cong \Z_2 \circ Q D^{4j}, \nonumber\\
  && \bE^{-1}_{8 j+4} \cong Q D^{4 j+1 }, \nonumber\\
  && \bE^{-1}_{8 j+5} \cong  \Z_4 \circ Q D^{4 j+1}, \nonumber\\
  && \bE^{-1}_{8 j+ 6} \cong  D^{4 j+3}, \nonumber\\
  && \bE^{-1}_{8 j+7} \cong \Z_2 \circ D^{4 j+3},
  \ena
  which are respectively
 isomorphic to extraspecial two-groups, almost extraspecial
 two-groups, and nearly extraspecial two-groups.
 \end{thm}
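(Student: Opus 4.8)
The plan is to proceed by induction on $j$ (equivalently on $n$), peeling off factors of $D$ or $Q$ by identifying, inside the generator set $e_1,\dots,e_n$, a pair of anticommuting generators that generates a copy of $D$ or $Q$ and that commutes with the complementary subgroup. First I would treat the base cases $n=1,\dots,8$ directly: for small $n$ one simply checks the defining relations (a)--(c) against the central-product presentations on the right-hand side, using that $e_i^2=\nu=-1$ throughout, that adjacent generators anticommute, and that non-adjacent ones commute. For instance, $\langle e_1,e_2\rangle$ with $e_1^2=e_2^2=-1$ and $e_1e_2=-e_2e_1$ is exactly $Q$, while a suitably chosen pair such as $\langle e_1e_2,\ e_2e_3\rangle$ can be arranged to have one generator squaring to $+1$, giving $D$; tracking which is which as $n$ grows mod $8$ is the bookkeeping that produces the eight cases.

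The inductive step is the heart of the argument. Given $\bE^{-1}_n$ with $n$ large, I would exhibit two elements $x,y$ built from $e_1,e_2$ (or a small initial segment) that (i) generate a subgroup $H\cong D$ or $H\cong Q$, and (ii) centralize a subgroup $K$ generated by modified versions $e_i'$ of $e_3,\dots,e_n$ — where the primes denote multiplication by appropriate products of $e_1,e_2$ chosen precisely so that the $e_i'$ again satisfy relations (a)--(c) (possibly with a shifted value of $\nu$) and commute with $x,y$. Then $H\cap K\subseteq Z(\bE^{-1}_n)$ and $\bE^{-1}_n=H\circ K$ with $K\cong\bE^{\nu'}_{n-2}$ or a $\Z_2$/$\Z_4$-extension thereof. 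Iterating, the value of $\nu'$ for the residual group cycles, which is exactly why the pattern has period $8$ in $n$ and why $\Z_2$ and $\Z_4$ factors appear for $n$ in certain residue classes (those are the cases where $n-1$ or $n$ is odd so the center is $\Z_4$ or $\Z_2\times\Z_2$ by the remark in Subsection 3.1, forcing an extra central cyclic factor). The identity $D^2\cong Q^2$ recorded in Section 2 is then invoked to reconcile any two descriptions differing by trading a $Q\circ Q$ for a $D\circ D$, and to match the stated normal forms $QD^{4j}$, $D^{4j}$, etc.

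Finally I would verify the order: each central product of $r$ copies of $D$ and $Q$ has order $2^{2r+1}$, and with the $\Z_2$ or $\Z_4$ prefactor (which contributes one extra factor of $2$ since the $\Z_2$ is identified with the existing center, or the $\Z_4$ doubles it) one gets $2^{n+1}=|\bE^{-1}_n|$ in every case, and one checks the center matches ($\Z_2$ for $n$ even, $\Z_4$ or $\Z_2\times\Z_2$ for $n$ odd), confirming these are extraspecial, almost extraspecial, or nearly extraspecial as claimed. The main obstacle I expect is step (ii) of the induction: choosing the twisted generators $e_i'$ so that they \emph{simultaneously} commute with the split-off $H$ and still obey a clean presentation of $\bE^{\nu'}_{n-2}$ — the sign chase is delicate, since anticommutation of $e_i$ with $e_{i+1}$ must be preserved while commutation with $e_1,e_2$ must be forced, and getting the induced $\nu'$ right (it is not simply $\nu$) is what pins down the eight-fold case split rather than a naive four-fold one. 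An alternative, perhaps cleaner, route is cohomological: $\bE^{-1}_n/\Z_2$ is an $n$-dimensional $\mathbb{F}_2$-space with the commutator pairing given by the tridiagonal "path" matrix, and the isomorphism type of the extension is determined by this symplectic/quadratic form data over $\mathbb{F}_2$ together with the squaring map $e_i\mapsto\nu$; classifying that quadratic form up to equivalence (its Arf invariant, essentially) reproduces the $D$ versus $Q$ dichotomy and the period-$8$ behavior, and I would fall back on this if the explicit generator surgery becomes unwieldy.
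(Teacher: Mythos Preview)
Your high-level plan---inductively peel off a $D$ or $Q$ factor as a central product, use $D^2\cong Q^2$ to normalize, and treat odd $n$ by splitting off the extra central $\Z_2$ or $\Z_4$---is exactly the paper's strategy. The tactical difference is in \emph{which} subgroup gets peeled off, and here the paper's choice is the key simplification you are missing. Rather than taking $H=\langle e_1,e_2\rangle$ and then twisting $e_3,\dots,e_n$ to force commutation (the step you correctly flag as delicate), the paper peels from the \emph{top}: it sets
\[
{\bE'}^{-1}_2 \;=\; \langle\, e_1e_3e_5\cdots e_{2n-1},\ e_{2n}\,\rangle
\]
inside $\bE^{-1}_{2n}$. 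The point is that the odd-index product $e_1e_3\cdots e_{2n-1}$ already commutes with every $e_j$ for $j\le 2n-2$ (each such $e_j$ anticommutes with either zero or two of the odd factors), and $e_{2n}$ trivially commutes with $e_1,\dots,e_{2n-2}$. Hence ${\bE'}^{-1}_2$ centralizes $\bE^{-1}_{2n-2}=\langle e_1,\dots,e_{2n-2}\rangle$ with no twisting whatsoever, and since $(e_1e_3\cdots e_{2n-1})^2=(-1)^n$ while $e_{2n}^2=-1$, one reads off ${\bE'}^{-1}_2\cong D$ for $n$ even and $\cong Q$ for $n$ odd. This gives the clean two-step recursion $\bE^{-1}_{4k}\cong\bE^{-1}_{4k-2}\circ D$, $\bE^{-1}_{4k+2}\cong\bE^{-1}_{4k}\circ Q$, and the odd case is handled the same way via the central element $e_1e_3\cdots e_{2n+1}$, whose square $(-1)^{n+1}$ decides $\Z_2$ versus $\Z_4$.

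By contrast, your front-peeling with $\langle e_1,e_2\rangle\cong Q$ forces $e_3':=e_1e_3$ to restore commutation with $e_2$, but then $(e_3')^2=+1$ while $(e_4)^2=-1$, so the complement is no longer an $\bE^{\nu'}_{n-2}$ with a uniform $\nu'$; you would have to carry a mixed-$\nu$ presentation through the induction, which is exactly the bookkeeping the paper's choice avoids. Your cohomological fallback (Arf invariant of the induced quadratic form on $\bE^{-1}_n/\Z_2$) is a perfectly valid alternative route and would also yield the period-$8$ classification, but the paper does not use it.
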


\begin{proof} We classify $\bE^{-1}_{2n}$, and then use the
 classification of $\bE^{-1}_{2n}$ to classify $\bE^{-1}_{2n+1}$. The
 proof can be read in the  three steps.

 (1). Even cases. The abstract error group $\bE^{-1}_{2n}$ is verified as
 the central product of its two subgroups $\bE^{-1}_{2n-2}$ and
 ${\bE^\prime}^{-1}_2$, i.e., $\bE^{-1}_{2n}\cong \bE^{-1}_{2n-2} \circ {\bE^\prime}^{-1}_2$.
 The group $\bE^{-1}_{2n-2}$ is generated by $e_1, e_2, \cdots, e_{2n-2}$, and
 ${\bE^\prime}^{-1}_2$ denotes the subgroup $<e_1e_3\cdots e_{2n-1}, e_{2n}>$
 generated by $e_1e_3\cdots e_{2n-1}$ and $e_{2n}$. The two generators of
 ${\bE^\prime}^{-1}_2$ satisfy the algebraic relations,
  \eqa
 && e_{2n}^2 =-1, \quad (e_1 e_3 \cdots e_{2n-1})^2=(-1)^n,
  \nonumber\\
 && (e_1 e_3 \cdots e_{2n-1}) e_{2n} =- e_{2n} (e_1 e_3 \cdots
 e_{2n-1}),
  \ena
  and hence we observe that ${\bE^\prime}^{-1}_{2}$ at $n=2k$ is isomorphic to the
  dihedral group $D$ as well as ${\bE^\prime}^{-1}_{2}$ at $n=2k+1$ is isomorphic to
  the quaternion group $Q$.

 The intersection set of $\bE^{-1}_{2n-2}$ and ${\bE^\prime}^{-1}_2$
 is $\Z_2=\{\pm 1\}$ since they have different generators.  Two generators of
 $<e_1e_3\cdots e_{2n-1}, e_{2n}>$ are commutative
 with generators of $\bE^{-1}_{2n-2}$ due to the defining
 relations of $\bE^{-1}_{2n}$. The number of group elements $g$ of $\bE^{-1}_{2n}$
 having the form $g=g_1 g_2$ with $g_1\in \bE^{-1}_{2n-2}, g_2\in {\bE^\prime}^{-1}_2$,
 is counted in the following
  \eq
     2 \times 2^{2n-2} \times 2^2 = 2^{1+2 n}
  \en
where the integers from the left to the right respectively denote
the order of the center $\Z_2$, the order of
$\bE^{-1}_{2n-2}/Z(\bE^{-1}_{2n-2})$, the order of
${\bE^\prime}^{-1}_2/Z({\bE^\prime}^{-1}_2)$, and the order of
$\bE^{-1}_{2n}$. Such a counting completes our proof that
$\bE^{-1}_{2n}=\bE^{-1}_{2n-2}\circ {\bE^\prime}^{-1}_2$, namely,
\eq
 \bE^{-1}_{4k} \cong \bE^{-1}_{4k-2}\circ D, \quad
 \bE^{-1}_{4 k +2} \cong \bE^{-1}_{4 k} \circ Q.
\en

Solving the above recursive formula between $\bE^{-1}_{2 n}$ and
$\bE^{-1}_{2n-2}$ leads to  $\bE^{-1}_{4k}\cong Q^k D^k$ and
$\bE^{-1}_{4k+2} \cong Q^{k+1} D^k$. With the isomorphic relation
$D^2 \cong Q^2$, we finally classify $\bE^{-1}_{2n}$ into four
classes,
 \eq
 \bE^{-1}_{8j} \cong D^{4j},\quad \bE^{-1}_{8j+2}\cong QD^{4j},
         \quad \bE^{-1}_{8j+4}\cong Q
 D^{4j+1}, \quad \bE^{-1}_{8j+6}\cong D^{4j+3}.
 \en

(2). Odd cases. Let us now study the classification of the abstract
error group
 $\bE^{-1}_{2n+1}$. The group element $e_1e_3\cdots e_{2n+1}$ commutes
 with all elements of $\bE^{-1}_{2n+1}$, and so it is in the center
 $Z(\bE^{-1}_{2n+1})$. The subgroup $<e_1e_3\cdots e_{2n+1}>$
 generated by $e_1e_3\cdots e_{2n+1}$ is either isomorphic to
 $\Z_4$ at $n=2k$ or isomorphic to $\Z_2$ at $n=2k+1$, due to
 $(e_1e_3\cdots e_{2n+1})^2=(-1)^{n+1}$. Hence  $\bE^{-1}_{2n+1}$ is
 the central product of its two subgroups $\bE^{-1}_{2n}$ and
 $<e_1e_3\cdots e_{2n+1}>$, namely,
 \eq
 \bE^{-1}_{4k+1} \cong \bE^{-1}_{4k}\circ \Z_4, \quad \bE^{-1}_{4k+3}
    \cong \bE^{-1}_{4k+2}\circ \Z_2
 \en
which give rise to the classification of $\bE^{-1}_{2n+1}$ as
follows
 \eqa
 && \bE^{-1}_{8j+1} \cong \bE^{-1}_{8j} \circ \Z_4, \quad \bE^{-1}_{8j+3}
     \cong \bE^{-1}_{8j+2}\circ \Z_2, \nonumber\\
 &&  \bE^{-1}_{8j+5}\cong \bE^{-1}_{8j+4}\circ \Z_4, \quad  \bE^{-1}_{8j+7} \cong
           \bE^{-1}_{8j+6} \circ \Z_2.
 \ena
Note that $\bE^{-1}_{8j+3}$ and $\bE^{-1}_{8j+7}$ have the center
$\Z_2\times \Z_2$ in terms of the first $\Z_2=\{\pm1\}$ and the
second $\Z_2=<e_1e_3\cdots e_{8j+3}>$ or $\Z_2=<e_1e_3\cdots
e_{8j+7}>$.

(3). Concluding the steps (1) and (2), we prove our theorem and
classify the abstract groups $\bE_n^{-1}$ into three categories:
$\bE^{-1}_{8j}$, $\bE^{-1}_{8j+2}$, $\bE^{-1}_{8j+4}$ and
$\bE^{-1}_{8j+6}$ are extraspecial two-groups with the center
$\Z_2$; $\bE^{-1}_{8j+1}$ and $\bE^{-1}_{8j+5}$ are almost
extraspecial two-groups with the center $\Z_4$; $\bE^{-1}_{8j+3}$
and $\bE^{-1}_{8j+7}$ are nearly extraspecial two-groups with the
center $\Z_2 \times \Z_2$.
\end{proof}

\begin{thm}
Abstract error groups $\bE^1_n$ are classified into two categories.
The ones at $n=2k$ are isomorphic to extraspecial two-groups, i.e.,
$\bE^1_{2k} \cong D^k$, and the others at $n=2k+1$ are isomorphic to
nearly extraspecial two-groups with the center $\Z_2\times \Z_2$,
i.e., $\bE^1_{2k+1}\cong \Z_2 \circ D^k$.
\end{thm}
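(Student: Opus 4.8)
The plan is to follow the recursive central-product strategy of the proof of Theorem~\ref{cl_e_ne}, which simplifies considerably here: since $e_i^2=1$ for every $i$, each square that arises in the bookkeeping equals $+1$, so no parity-of-$n$ alternation occurs and only the dihedral group $D$ (never the quaternion group $Q$) appears as a building block.

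\textbf{Even case.} First I would peel off one dihedral factor by showing $\bE^1_{2n}\cong\bE^1_{2n-2}\circ\bE^{\prime\prime}_{2}$, where $\bE^1_{2n-2}=\langle e_1,\dots,e_{2n-2}\rangle$ and $\bE^{\prime\prime}_{2}=\langle P,e_{2n}\rangle$ with $P:=e_1e_3\cdots e_{2n-1}$. From the defining relations one checks directly that $P$ and $e_{2n}$ centralize every $e_j$ with $j\le 2n-2$: for even $j=2i$ the anticommutations of $e_{2i}$ with $e_{2i-1}$ and with $e_{2i+1}$ (both factors of $P$) cancel; for odd $j$ the relevant generator commutes with all odd-indexed generators; and $e_{2n}$ is non-adjacent to every such $e_j$. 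The intersection $\bE^1_{2n-2}\cap\bE^{\prime\prime}_{2}$ equals $\{\pm1\}$ because the two subgroups are built from disjoint generator sets, and they generate $\bE^1_{2n}$ since $e_{2n-1}=(e_1e_3\cdots e_{2n-3})^{-1}P$; the order count $2\cdot 2^{2n-2}\cdot 2^{2}=2^{2n+1}=|\bE^1_{2n}|$ then certifies the central-product decomposition, just as in the previous theorem. What distinguishes this case from the $\nu=-1$ case is that $\bE^{\prime\prime}_{2}\cong D$ \emph{always}: since $P^2=\nu^n=1$, $e_{2n}^2=1$ and $Pe_{2n}=-e_{2n}P$, putting $k:=P$ and $h:=Pe_{2n}$ gives $k^2=1$, $h^2=-1$, $hk=-kh$, which is exactly the presentation of $D$ from Section~2. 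Hence the recursion reads $\bE^1_{2n}\cong\bE^1_{2n-2}\circ D$ with base case $\bE^1_2\cong D$, and solving it yields $\bE^1_{2k}\cong D^k$, an extraspecial two-group of order $2^{2k+1}$ with center $\Z_2$.

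\textbf{Odd case.} Here I would isolate the extra central involution. The element $P':=e_1e_3\cdots e_{2n+1}$ commutes with every generator (the same cancellation of anticommutations, now for all even indices up to $2n$), so $P'\in Z(\bE^1_{2n+1})$; since $(P')^2=\nu^{n+1}=1$ and $P'\notin\{\pm1\}$, the subgroup $\langle P'\rangle\cong\Z_2$ meets $\bE^1_{2n}=\langle e_1,\dots,e_{2n}\rangle$ only in the identity. Because $e_{2n+1}=(e_1e_3\cdots e_{2n-1})^{-1}P'$, these two subgroups generate $\bE^1_{2n+1}$, they centralize one another, and $|\bE^1_{2n}|\cdot|\langle P'\rangle|=2^{2n+2}=|\bE^1_{2n+1}|$; therefore $\bE^1_{2n+1}\cong\bE^1_{2n}\circ\Z_2\cong\Z_2\circ D^{n}$, which (the two factors meeting only in the identity) is the internal direct product $\Z_2\times D^n$, with center $Z(D^n)\times\Z_2\cong\Z_2\times\Z_2$, i.e.\ a nearly extraspecial two-group.

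\textbf{Main obstacle.} The only genuinely new content beyond the method of Theorem~\ref{cl_e_ne} is recognizing that the recursive building block $\bE^{\prime\prime}_{2}$ is forced to be $D$ and never $Q$ — this is precisely where the hypothesis $\nu=1$ enters, through $P^2=1$, whereas for $\nu=-1$ the value $P^2=(-1)^n$ produced the alternation between $D$ and $Q$. The remaining steps are the elementary but slightly fussy commutation checks and order counts needed to license each central-product splitting, where I expect no real difficulty. One could instead avoid the recursion and invoke the classification of extraspecial $2$-groups together with a computation of the Arf invariant of the quadratic form on $\mathbb{F}_2^{2k}$ determined by $q(\bar e_i)=0$ and $b(\bar e_i,\bar e_j)=1\iff|i-j|=1$, but the recursive argument is the most direct route and keeps the exposition parallel to the preceding theorem.
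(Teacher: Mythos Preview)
Your proposal is correct and follows essentially the same route as the paper's own proof: recursively split off $\langle e_1e_3\cdots e_{2n-1},\,e_{2n}\rangle\cong D$ in the even case and the central $\langle e_1e_3\cdots e_{2n+1}\rangle\cong\Z_2$ in the odd case, the only difference from Theorem~\ref{cl_e_ne} being that $\nu=1$ forces every square $P^2,(P')^2,e_{2n}^2$ to equal $+1$, so the building block is always $D$ and the extra central factor is always $\Z_2$. The paper states this in two sentences and leaves the verifications implicit; you have simply written them out.
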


\begin{proof}

We follow the methodology of the proof of  {\bf Theorem
\ref{cl_e_ne}}. It is easy to verify $\bE^1_2 \cong D$ and
$<e_1e_3\cdots e_{2k-1}, e_{2k}>\cong D$ which give rise to
$\bE^1_{2k}\cong D^k$. It is also explicit that $<e_1 e_3\cdots
e_{2k+1}> \cong \Z_2$ leads to $\bE^1_{2k+1} \cong \Z_2\circ D^k$
with the center $\Z_2 \times \Z_2$.
\end{proof}

\begin{remarks}
Based on the above two theorems, we are able to state that the
abstract groups $\bE^{-1}_n$ are abstract error groups which are
often not isomorphic to the Pauli groups, see the following
subsection.
\end{remarks}

\subsection{Comparisons of $\bE^\nu_n$ with the Pauli groups}

 A two-dimensional Hilbert space ${\cal H}_2 \cong \C^2$ over the
complex field ${\mathbb C}$ is called a {\em qubit} in quantum
information and computation. The symbol $1\!\! 1_2$ denotes the
$2$-dimensional identity operator or $2\times 2$ identity matrix.
The Pauli matrices $X,Y,Z$ have the conventional forms,
\eq X= \left(\begin{array}{cc} 0 & 1 \\
 1 & 0 \end{array}\right), \quad Z=\left(\begin{array}{cc} 1 & 0 \\
 0 & -1 \end{array}\right), \quad Y=ZX=\left(\begin{array}{cc}
 0 & 1 \\ -1 & 0\end{array}
 \right)
\en respectively denoting the bit-flip, phase-flip, and bit-phase
flip operations on a single qubit in quantum error correction
theory. An $n$-fold tensor product in terms of $1\!\! 1_2$ and Pauli
matrices $X,Y,Z$ has a simpler notation. For example, a 9-fold
tensor product $Z_1 Z_2 =Z\otimes Z \otimes (1\!\! 1_2)^{\otimes 7}$
where the notation $(1\!\! 1_2)^{\otimes 7}$ denotes a 7-fold tensor
product of $1\!\! 1_2$.

In the literature, the standard abstract error groups for the
 $k$-qubit Hilbert space $\Hc_2^{\otimes k}$ are the real Pauli
 group $\Pc_k$ or the complex Pauli group $\Pc_k^\prime$, and the
 related Pauli error bases are also denoted by $\Pc_k$ or
 $\Pc_k^\prime$. The Pauli group $\Pc_k$
 with the center $Z(\Pc_k)=\Z_2$ has the generators,
 \eq
 \label{pauli_basis}
  \Pc_k: \quad X_1, X_2, \cdots, X_k; Z_1, Z_2, \cdots, Z_k
 \en
which satisfy algebraic relations defining the Pauli group, \eqa
 && X_i^2 =Z_i^2 =1, \quad X_i Z_i =-Z_i X_i, \quad i=1,\cdots, k
            \nonumber\\
 && X_i X_j =X_j X_i,\quad Z_i Z_j =Z_j Z_i, \quad i,j=1,\cdots, k
           \nonumber\\
 && X_i Z_j = Z_j X_i, \quad  i \neq j  \textrm{ and } i,j=1,\cdots,
 k.
\ena Each pair of $X_i, Z_i$ yields a dihedral group $D$ so that
$\Pc_k$ is isomorphic to the central product of dihedral groups,
i.e., $\Pc_k \cong D^k$. Hence the real Pauli group $\Pc_k$ is an
extraspecial two-group.
 The complex Pauli group $\Pc_k^\prime$ is isomorphic to an almost
extraspecial two-group $\Z_4 \circ D^k$ with the center $\Z_4$
generated by the imaginary unit $i$, i.e., $\Z_4=\{\pm i, \pm 1\}$.

\begin{table}
\begin{center}
\begin{tabular}{|l|l|l|l|l|l|l|}
\hline
   & $\Pc_k$ & $\bE^1_{2k}$ & $\bE^{-1}_{2k}$ & $\Pc^\prime_k$ & $\bE^1_{2k+1}$ &
 $\bE^{-1}_{2k+1}$ \\
 \hline \hline
  $k=1$ & $D$  & $D$ & $Q$ & $\Z_4\circ D$ & $\Z_2\circ D$ & $\Z_2 \circ
  Q$\\
  \hline
\hline
  $k=2$ & $D^2$  & $D^2$ & $QD$ & $\Z_4\circ D^2$ & $\Z_2\circ D^2$ & $\Z_4
  \circ Q
  D$\\
  \hline
 \hline
  $k=3$ & $D^3$  & $D^3$ & $D^3$ & $\Z_4\circ D^3$ & $\Z_2\circ D^3$ &
  $\Z_2 \circ D^3$\\
   \hline
  \hline
 $k=4$ & $D^4$  & $D^4$ & $D^4$ & $\Z_4\circ D^4$ & $\Z_2\circ D^4$ & $\Z_4
  \circ D^4$\\
  \hline
\end{tabular}
\caption{Comparisons of $\bE^\nu_{2k}$, $\bE^\nu_{2k+1}$ with
$\Pc_k, \Pc^\prime_k$, $k=1,2,3,4$. }
\end{center}
\end{table}

In the following, we compare abstract error groups $\bE_{n}^{\nu}$
with the Pauli groups:
\begin{enumerate}
\item  The generators of the Pauli group $\Pc_k$ have the form in
terms of generators of $\bE^{-1}_{2k}$,
 \eqa
  \label{bases_trans}
 Y_i &=& e_{2 i}, \quad i =1,\cdots, k, \nonumber\\
 Z_i &=& (-\sqrt{-1})^i e_1 e_3 \cdots e_{2i-1}, \nonumber\\
 X_i &=& Z_i Y_i  =(-\sqrt{-1})^i e_1 e_3 \cdots e_{2i-1} e_{2i};
 \ena
\item The real Pauli groups $\Pc_k$ are isomorphic to $\bE^1_{2k}$;
\item The complex Pauli groups $\Pc^\prime_k$ with the center $\Z_4$
are not isomorphic to $\bE^1_{2k+1}$ with the center $\Z_2\times
\Z_2$; \item  $\bE^{-1}_{8j}, \bE^{-1}_{8j+6} $ are respectively
isomorphic to $\Pc_{4j}, \Pc_{4j+3}$; \item $\bE^{-1}_{8j+2},
\bE^{-1}_{8j+4} $ are very interesting cases since they contain a
quaternion group $Q$ which is not in the Pauli groups; \item
$\bE^{-1}_{8j+1}$ are isomorphic to the complex Pauli groups
$\Pc^\prime_{4j}$; \item $\bE^{-1}_{8j+7}$ are isomorphic to
$\bE^{1}_{8j+7}$, and both have the same center $\Z_2\times \Z_2$.

\end{enumerate}

As an example, we compare $\Pc_2$ with $\bE^{-1}_4$. With the help
of the formula (\ref{order_4}), we observe that the real Pauli group
$\Pc_2\cong D^2$ has 20 order-2 elements,
 \eqa
 &&  \pm 1, \quad \pm X_1, \quad \pm X_2,\quad  \pm Z_1,\quad \pm Z_2,
 \nonumber\\
 && \pm X_1 X_2, \quad \pm X_1 Z_2, \quad \pm X_2 Z_1, \quad \pm Z_1
 Z_2,\quad
 \pm X_1 X_2 Z_1 Z_2
 \ena
 and $12=2^4-(-2)^2$  order-4 elements,
 \eq
\pm X_1 Z_1, \quad \pm X_2 Z_2, \quad \pm X_1 X_2 Z_1, \quad \pm X_1
Z_1 Z_2,\quad \pm X_1 X_2 Z_2, \quad \pm X_2 Z_1 Z_2.
 \en
whereas the abstract error group $\bE^{-1}_4\cong DQ$ has 12 order-2
elements,
 \eqa
\pm 1, \quad \pm e_1 e_3, \quad \pm e_1 e_4, \quad \pm e_2 e_4,\quad
\pm e_1 e_2 e_4, \quad \pm e_1 e_3 e_4,
 \ena
and $20=2^4+(-2)^2$  order-4 elements, \eqa
 && \pm e_1, \quad \pm e_2, \quad \pm e_3, \quad \pm e_4, \quad \pm
 e_1 e_2,\quad \pm e_2 e_3, \quad \pm e_3 e_4, \nonumber\\
 && \pm  e_1 e_2 e_3, \quad \pm e_2 e_3 e_4, \quad \pm e_1 e_2 e_3
 e_4.
\ena

Table 1 lists more examples for comparisons of our abstract error
groups $\bE^\nu_{2k}$, $\bE^\nu_{2k+1}$ with the Pauli groups
$\Pc_k, \Pc_k^\prime$, $k=1,2,3,4.$

\begin{remark}
The index groups $\bE^\nu_n/\Z_2$ are elementary abelian groups and
so associated QECC are still stabilizer codes according to the
Clifford code theory \cite{kr02a}, which is the same as the Pauli
groups.
\end{remark}

\section{Nice error bases and Jones UBGR at $q=i, r=4$}

We recognize unitary irreducible representations of abstract error
groups $\bE^\nu_n$ as nice error bases, and set up a Jones UBGR at
$q=i, r=4$ (or modulo a phase factor) in terms of these bases.

\subsection{Nice error bases associated with $\bE^{\nu}_n$ }

Given an abstract error group $G$ and its faithful irreducible
representation $\phi$, the associated nice error basis
$\xi=\{\phi(g)|\, g\in G\}$ satisfies the defining relations
\cite{kr02a}:
\begin{description}
\item a). $\phi(1)$ is the $d\otimes d$ identity matrix, $d=Tr\phi(1)$;
\item b). $Tr\, \phi(g) =0$, for all $g\in
G/Z(G)$ and $g \neq 1$;
\item c). $\phi(g)\, \phi(h) =\omega(g,h)\, \phi(gh)$ for all $g,h\in
G/Z(G)$;
\end{description}
where the set of complex numbers $\omega(g,h)$ forms a cyclic group,
and hence the nice error basis $\xi$ is a projective faithful
irreducible representation of the index group $G/Z(G)$.

Now we study nice error bases associated with the abstract error
groups $\bE^{\nu}_n$. Interested readers are invited to see
\cite{frw06} for irreducible representations of $\bE^{\nu}_{n}$. 1)
Even cases at $\nu=-1$. The $2^k$-dimensional irreducible
representation $\rho$ of $\bE^{-1}_{2k}$ has the form,
 \eqa \label{rho} \quad\quad \rho(e_1) &=& \sqrt{-1} Z_1, \nonumber\\
            &\vdots&  \nonumber\\
            \rho(e_{2i-1}) &=& \sqrt{-1} Z_{i-1} Z_i,\quad i=1,2,\cdots, k,
                \nonumber\\
            \rho(e_{2i}) &=& Y_i, \nonumber\\
            &\vdots& \nonumber\\
            \rho(e_{2k}) &=& Y_k.
  \ena
2) Odd cases at $\nu=-1$. The two inequivalent $2^k$-dimensional
irreducible representations of $\bE^{-1}_{2k+1}$ are respectively
denoted by $\lambda_i$, $i=1,2$, in which $\lambda_i(e_1)$,
$\lambda_i(e_2)$,$\cdots$, $\lambda_i(e_{2k})$ have the same form as
in the representation $\rho$, except that $\lambda_i(e_{2k+1})$ is
specified to be $\lambda_i(e_{2k+1})= \pm \sqrt{-1} Z_k$, the $+$
sign for $\lambda_1$ and $-$ sign for $\lambda_2$.

The irreducible representation $\rho$ of $\bE^{-1}_{2k}$ leads to
the nice error bases  $\rho_{2k}$ given by
 $$
 \rho_{2k}=\{\rho(g)| g\in \bE^{-1}_{2k}/Z(\bE^{-1}_{2k})\}
 $$
which is found to satisfy the above defining relations of nice error
bases. The nice error bases associated with irreducible
representations $\lambda_1, \lambda_2$ of the abstract error group
$\bE^{-1}_{2k+1}$ are respectively denoted by $\lambda_{1,2k},
\lambda_{2,2k}$. It is explicit that $\lambda_{1,2k},
\lambda_{2,2k}$ represent the same nice error bases as $\rho_{2k}$
in the $2^k$-dimensional vector space.

As all generators of $\bE^{-1}_n$ are rescaled by the imaginary unit
$i$, the resulted group is isomorphic to the abstract error group
$\bE^1_n$. Hence nice error bases $\rho^\prime_{2k}$ at $n=2k$ and
$\lambda^\prime_{i,2k}$, $i=1,2$, at $n=2k+1$ associated  with
$\bE^1_n$ are respectively obtained by $\rho_{2k}$, $\lambda_{1,2k}$
and $\lambda_{2,2k}$ times the imaginary unit $i$.

\begin{remark}
Nice error bases $\rho_{2k}$ (\ref{rho}) are recast in terms of the
Pauli matrices with relevant phase factors, though associated
abstract error groups $\bE^{-1}_{2k}$ are often not isomorphic to
the Pauli groups $\Pc_k$. $\rho_{2k}$ (\ref{rho}) can be proved to
be equivalent to the Pauli error bases $\Pc_k$ (\ref{pauli_basis})
modulo phase factors, in view of the definition \cite{kr03} for the
equivalence of two unitary error bases.
\end{remark}

\subsection{Jones UBGR at $q=i, r=4$ via nice error bases}

Artin's braid group $\B_n$ on $n$ strands has a presentation in
terms of generators $b_1,\ldots$,$b_{n-1}$ satisfying the
commutation relation, \eq \label{br1} b_{i} b_{j} = b_{j}
b_{i},\qquad |i - j | \ge 2 \en and the braid relations,
 \eq \label{br2}
   b_{i}  b_{i+1} b_{i} = b_{i+1} b_{i} b_{i+1},
   \qquad 1 \leq i \leq n-2.
\en

The symmetric group $\Ss_n$ includes all possible permutations of
$n$ objects, and it is generated by transpositions $(i,i+1)$ between
$i$-th object and $(i+1)$-th object, $1\le i \le n-1$. $\B_n$ has a
finite-index normal subgroup $\PP_n$ generated by all conjugates of
the squares of the generators of $\B_n$, see \cite{frw06}. $\PP_n$
is called the \emph{pure braid group} and can be understood as the
kernel of the surjective homomorphism $\B_n\to\Ss_n$ given by
$b_i\mapsto (i,i+1)$. In other words we have an isomorphism
${\mathcal S}_n$ $\cong$ ${\mathcal B}_n/\PP_n$.

\begin{thm}\label{zrwwg}
 Let $\{T_1,\ldots,T_{n-1}\}$ be the images
of the generators $\{e_1,\ldots,e_{n-1}\}$ of $\bE^{-1}_{n-1}$ under
a representation $\phi_{n-1}$ of $\bE^{-1}_{n-1}$ such that:
\begin{enumerate}
\item[(a)] $T_i^2=-Id$,
\item[(b)] $T_iT_j=T_jT_i$ if $|i-j|>1$,
\item[(c)] $T_iT_{i+1}=-T_{i+1}T_i$ for all $1\leq i\leq (n-2)$.
\end{enumerate}
Then the set of matrices $\{\check{R}_1,\ldots,\check{R}_{n-1}\}$
defined by $\check{R}_i=\frac{1}{\sqrt{2}}(Id+T_i)$
 gives a representation of $\B_n$ by $b_i\rightarrow \check{R}_i$.
 If in addition
the $T_i$ are anti-Hermitian (i.e. $T_i=-T_i^\dag$), the $\B_n$
representation is unitary.
\end{thm}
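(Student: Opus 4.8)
The plan is to check directly that the operators $\check R_i$ are invertible and that they satisfy Artin's relations \eqref{br1} and \eqref{br2}; the universal property of the Artin presentation of $\B_n$ then produces the homomorphism $b_i\mapsto\check R_i$, and a one-line adjoint computation supplies unitarity under the extra hypothesis. The only facts I will use are precisely the three given relations (a) $T_i^2=-Id$, (b) $T_iT_j=T_jT_i$ for $|i-j|>1$, and (c) $T_iT_{i+1}=-T_{i+1}T_i$; note these are nothing but the defining relations of $\bE^{-1}_{n-1}$ transported through $\phi_{n-1}$, so no further assumption on $\phi_{n-1}$ (faithfulness, etc.) is needed.

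First I would observe that each $\check R_i$ is invertible: by (a), $(Id+T_i)(Id-T_i)=Id-T_i^2=2\,Id$, hence $\check R_i^{-1}=\tfrac{1}{\sqrt2}(Id-T_i)$. Far commutation is then immediate: for $|i-j|>1$ one has $\check R_i\check R_j=\tfrac12(Id+T_i+T_j+T_iT_j)$, which by (b) is symmetric in $i$ and $j$, so $\check R_i\check R_j=\check R_j\check R_i$, i.e.\ \eqref{br1}.

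The one computation with any content is the braid relation. Writing $A=T_i$, $B=T_{i+1}$, so that $A^2=B^2=-Id$ and $AB=-BA$, the crucial simplification is $ABA=(AB)A=-(BA)A=-BA^2=B$. Expanding $(Id+A)(Id+B)=Id+A+B+AB$ and multiplying on the right by $Id+A$ gives the eight terms $Id,\ A,\ A,\ A^2,\ B,\ BA,\ AB,\ ABA$, i.e.\ (using $A^2=-Id$ and $ABA=B$) the terms $Id,\ A,\ A,\ -Id,\ B,\ BA,\ AB,\ B$, whose sum is $2A+2B$ since $Id+(-Id)=0$ and $AB+BA=0$. The identical calculation with $A$ and $B$ interchanged (the relation $AB=-BA$ being unchanged under the swap) gives $(Id+B)(Id+A)(Id+B)=2B+2A$. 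Hence
\[
\check R_i\check R_{i+1}\check R_i=\tfrac{1}{\sqrt2}(T_i+T_{i+1})=\check R_{i+1}\check R_i\check R_{i+1},
\]
which is \eqref{br2}. Since the invertible matrices $\check R_i$ satisfy \eqref{br1} and \eqref{br2}, the assignment $b_i\mapsto\check R_i$ extends, by the universal property of the Artin presentation, to a representation of $\B_n$.

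Finally, suppose in addition that $T_i=-T_i^\dag$. Then $\check R_i^\dag=\tfrac{1}{\sqrt2}(Id+T_i^\dag)=\tfrac{1}{\sqrt2}(Id-T_i)=\check R_i^{-1}$, so each $\check R_i$ is unitary; being generated by unitaries, the entire $\B_n$ representation is unitary. I do not expect a genuine obstacle in this argument: the only place that demands care is the bookkeeping in the braid-relation expansion, specifically establishing $T_iT_{i+1}T_i=T_{i+1}$, which is exactly the step where the anticommutation (c) is essential (the analogous relations $A^2=-Id$ alone would not suffice).
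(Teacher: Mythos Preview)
Your argument is correct in every detail: the inverse computation, the far-commutation, the braid-relation expansion via $ABA=B$, and the adjoint calculation for unitarity all go through exactly as you wrote them. The paper itself does not actually prove this theorem---its ``proof'' is a bare citation to \cite{zg07} and \cite{zrwwg07}---so your direct verification is in fact more self-contained than what appears here; the approach you take (direct expansion using the anticommutation to reduce $T_iT_{i+1}T_i$ to $T_{i+1}$) is the standard one and is what the cited references do as well.
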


\begin{proof}
This theorem is firstly used \cite{zg07} and then proved as {\bf
Lemma 3.8} in \cite{zrwwg07}.
\end{proof}

In terms of the $2^k$-dimensional nice error bases $\rho_{2k}$,
namely, $T_i=\rho(e_i)$, the finite image $\hat{\rho}_{2k}$
\cite{frw06} of the irreducible UBGR has the form, \eqa
\label{r_matrices}
 \hat{\rho}_{2k}:\quad\quad \check{R}_1 &=& d_1, \nonumber\\
            &\vdots&  \nonumber\\
            \check{R}_{2i-1} &=& D_{i-1,i} ,\quad i=2,\cdots, k, \nonumber\\
            \check{R}_{2i} &=& f_i, \nonumber\\
            &\vdots& \nonumber\\
           \check{R}_{2k} &=& f_k,
  \ena
where $d=e^{i\frac \pi 4 Z}$, $f= e^{\frac \pi 4 Y}$ and lower
indices of $d_1$, $f_i$, $f_k$ have the same meaning as that of
$Z_1$, $Z_i$, $Z_k$, and the $k$-fold tensor product $D_{i-1,i}$
 denotes a form
 $$ D_{i-1,i} = (1\!\! 1_2)^{\otimes {i-1}} \otimes  D
  \otimes (1\!\! 1_2)^{\otimes k-i-1}, \quad  D=e^{i\frac \pi 4 Z\otimes Z}.$$
Note that the finite images $\hat{\lambda}_{1,2k}$ or
$\hat{\lambda}_{2,2k}$ are associated with nice error bases
$\lambda_{1,2k}$ and $\lambda_{2,2k}$, and  $\check{R}_{2k+1}=d_k$
for $\hat{\lambda}_{1,2k}$ or $\check{R}_{2k+1}=\bar{d_k}$ for
$\hat{\lambda}_{2,2k}$ where $\bar{}$ denotes the complex
conjugation.

The irreducible representation $(\pi_{2k+1}, (\C^2)^{\otimes {2k+1}}
)$ of the braid group $\B_{2k+1}$ under the  $\check{R}$ matrices
(\ref{r_matrices}) are defined by $\pi_{2k+1}(b_i)=\check{R}_i$. The
finite image of the pure braid group $\PP_n$ under this
representation is denoted by $H_{2k+1}=\pi_{2k+1}(\PP_{2k+1})$ which
is isomorphic to the abstract error group $\bE^{-1}_{2k}$, namely,
$H_{2k+1}\cong \bE^{-1}_{2k}$. The finite image of the braid group
$\B_{2k+1}$ under this representation denoted by
$G_{2k+1}=\pi_{2k+1} (\B_{2k+1})$ is the extension of
$\bE^{-1}_{2k}$ by the symmetric group $\Ss_{2k+1}$, namely,
$G_{2k+1}/\bE^{-1}_{2k} \cong \Ss_{2k+1}$. Furthermore, the Jones
UBGR at $q=i, r=4$ \cite{jones86} is obtained by rescaling the
$\check{R}$ matrices (\ref{r_matrices}) in the way
$\check{R}^\prime_i = -e^{-i\frac \pi 4} \check{R}_i$. The images of
the braid group $\B_{2k+1}$ and the pure braid group $\PP_{2k+1}$
under the representation $\pi^\prime_{2k+1}(b_i)=\check{R}^\prime_i$
are respectively denoted by $$G_{2k+1}^\prime
=\pi^\prime_{2k+1}(\B_{2k+1}), \quad H_{2k+1}^\prime
=\pi^\prime_{2k+1}(\PP_{2k+1}) $$ It is easy to verify
$H_{2k+1}^\prime \cong \bE^1_{2k}$ and $G_{2k+1}^\prime/\bE^1_{2k}
\cong \Ss_{2k+1}$. Interested readers are invited to consult
\cite{jones86, frw06} for details.

\begin{remark}
The Jones UBGR at $q=i, r=4$ (or modulo a phase factor) can be
regarded as unitary basis transformation matrices from the product
states to GHZ states \cite{ghz89}, see Ref. \cite{zg07, zrwwg07} and
the references therein. Furthermore, the author has already
initiated the project of constructing QECC involving unitary braid
representations, for examples, QECC using GHZ states \cite{zhang08}.
\end{remark}

 \section{Concluding remarks}

 In this paper, we recognize finite images of the Jones unitary pure
 braid group representations at $q=i, r=4$ as a type of abstract error groups.
 We classify them by central products of dihedral groups and quaternion
 groups, and then realize that they are often not isomorphic to the Pauli
 groups in the literature. In our further research, we will continue to
 explore interesting QECC associated with unitary representations of the
 braid group and develop related fault-tolerant quantum computation, for
 examples, QECC using GHZ states \cite{zhang08}. Furthermore, we will revisit
 topics using the Pauli error bases in quantum information and computation
 with our nice error bases (\ref{rho}), for examples, quantum teleportation
 \cite{werner01} and noiseless subsystems \cite{klv00}.

 \section*{Acknowledgements}

 Y. Zhang  thanks Jennifer Franko, Eric C. Rowell, Zhenghan Wang
 and Yong-Shi Wu for stimulating comments, and thanks
 Pawel Wocjan for both helpful comments on the manuscript and
 financial support with his NSF grants CCF-0726771 and CCF-0746600,
 and thanks anonymous referees for their helpful comments.
 The author is also in part supported by
 NSF-China Grant-10605035.

\end{document}